\newtheorem{theorem}{Theorem}
\newtheorem{lemma}{Lemma}
\newtheorem{Definition}{Definition}
\newtheorem{proposition}{Proposition}
\begin{document}
\title{Aircraft Landing Problem:\\Efficient Algorithm for a Given Landing Sequence}

\author{Abhishek Awasthi$^*$, Oliver Kramer$^\dag$ and J\"org L\"assig$^*$ \\ \\
$^*$Department of Computer Science\\
University of Applied Sciences Zittau/G\"orlitz \\ G\"orlitz, Germany\\
\{abhishek.awasthi, joerg.laessig\}@hszg.de
\and 
$^\dag$Department of Computing Science\\
Carl von Ossietzky University of Oldenburg\\
Oldenburg, Germany \\
oliver.kramer@uni-oldenburg.de
}
\maketitle
\begin{abstract}
In this paper, we investigate a special case of the static aircraft landing problem (ALP) with the objective to optimize landing sequences and landing times for a set of air planes. The problem is to land the planes on one or multiple runways within a time window as close as possible to the preferable target landing time, maintaining a safety distance constraint. The objective of this well-known NP-hard optimization problem is to minimize the sum of the total penalty incurred by all the aircraft for arriving earlier or later than their preferred landing times. For a problem variant that optimizes a given feasible landing sequence for the single runway case, we present an exact polynomial algorithm and prove the run-time complexity to lie in $O(N^3)$, where $N$ is the number of aircraft. The proposed algorithm returns the optimal solution for the ALP for a given feasible landing sequence on a single runway for a common practical case of the ALP described in the paper. Furthermore, we propose a strategy for the ALP with multiple runways and present our results for all the benchmark instances with single and multiple runways, while comparing them to previous results in the literature.
\end{abstract}

\section{Introduction}
Prior to landing, an aircraft must go through an approaching stage directed by the air traffic control (ATC) tower. The ATC gives instructions to the aircraft regarding to the runway, speed and altitude of the aircraft in order to align it with the allocated runway and maintain the safety distance with its preceding aircraft. During peak hours, controllers must handle safely and effectively landings of a continuous flow of aircraft entering the radar range onto the assigned runway(s). The capacity of runways is highly constrained and this makes the scheduling of landings a difficult task to perform effectively. Increasing the capacity of an airport by building new runways is an expensive and difficult affair. Hence, the air traffic controllers face the problem of allocating a landing sequence and landing times to the aircraft in the radar range. Additionally, in case of airports with multiple runways, they have to take a decision on the runway allotment too, {\em i.e.\ \/}which aircraft lands on which runway. These decisions are made with the availability of certain information about the aircraft in the radar range~\cite{krishna,earnst,pinol}. A target landing time is defined as the time at which an aircraft can land if it flies straight to the runway at its cruise speed (most economical). This target landing time is bounded by an earliest landing time and a latest landing time known as the time window. The earliest landing time is determined as the time at which an aircraft can land if it flies straight to the runway at its fastest speed with no holding, and the latest landing time is determined as the time at which an aircraft can land after it is held for its maximal allowable time before landing. All the aircraft have to land within their time window and there are asymmetric penalties associated with each aircraft for landing earlier or later than its target landing time. Besides, there is another constraint of the safety distance that has to be maintained by any aircraft with its preceding aircraft. This separation is necessary as every aircraft creates a wake vortices at its rear and can cause a serious aerodynamic instability to a closely following aircraft. There are several types of planes which land on a runway hence the safety distance between any two aircraft depends on their types. This safety distance between any two aircraft can be easily converted to a safety time by considering the required separation and their relative speeds. If several runways are available for landing, the application of this constraint for aircraft landing on different runways usually depends upon the relative positions of the runways~\cite{krishna,earnst,pinol}. A formal definition of the ALP is given in Section~\ref{probform}.

The objective of the ALP is to minimize the total penalty incurred due to the deviation of the scheduled landing times of all the aircraft with their target landing times. Hence, the air traffic controllers not only have to find suitable landing times for all the aircraft but also a landing sequence so as to reduce the total penalty. This work considers the static case of the aircraft landing problem where the set of aircraft that are waiting to land is already known. For a special but practically very common case of the safety constraint, we present a polynomially bound exact algorithm for optimizing any given feasible landing sequence for the single runway case and an effective strategy for the multiple runway case. In the later part of the paper we present our results for all the benchmark instances provided by Beasley~\cite{beasley1} and compare the results with previous work on this problem.

\section{Related Work}
The aircraft landing problem described in this paper was first introduced and studied by Beasley in the mid-nineties~\cite{firstalp}. Since then, it has been studied by several researchers using different metaheuristics, hybrid metaheuristics, linear programming, variants of exact branch and bound algorithms~\emph{etc.,} for both the static and dynamic cases of the problem. In $1995$, Beasley~\emph{et al.} presented a mixed-integer zero-one formulation of the problem for the single runway case and later extended it to the multiple runway case~\cite{firstalp}. The ALP was studied for up to $50$ aircraft with multiple runways using linear programming based tree search and an effective heuristic algorithm for the problem. Again in $1995$, Abela~\emph{et al.}~\cite{abela} proposed a genetic algorithm and a branch and bound algorithm to solve the problem of scheduling aircraft landings. Ernst~\emph{et al.} presented a simplex algorithm which evaluated the landing times based on some partial ordering information. This method was used in a problem space search heuristic as well as a branch-and-bound method for both, the single and multiple runway case, for again up to $50$ aircraft~\cite{earnst}. Beasley~\emph{et al.} adopted similar methodologies and presented extended results~\cite{krishna}. In $1998$, Ciesielski~\emph{et al.} developed a real time algorithm for the aircraft landings using a genetic algorithm and performed experiments on landing data for the Sydney airport on the busiest day of the year~\cite{paul}. In $2001$, Beasley~\emph{et al.} developed a population heuristic and implemented it on actual operational data related to aircraft landings at the London Heathrow airport~\cite{beasley}. The dynamic case of the ALP was studied again by Beasley~\emph{et al.} by expressing it as a displacement problem and using heuristics and linear programming~\cite{beasleyDis}. In $2006$, Pinol and Beasley presented two heuristic techniques, Scatter Search and the Bionomic Algorithm and published results for the available test problems involving up to $500$ aircraft and $5$ runways~\cite{pinol}. The dynamic case of the problem for the single-runway case was again studied by Moser~\emph{et al.} in $2007$~\cite{moser}. They used extremal optimization along with a deterministic algorithm to optimize a landing sequence. In $2008$ Tang~\emph{et al.} implemented a multi-objective evolutionary approach to simultaneously minimize the total scheduled time of arrival and the total cost incurred~\cite{tang}. In $2009$, Bencheikh~\emph{et al.} approached the ALP using hybrid methods combining genetic algorithms and ant colony optimization by formulating the problem as a job shop scheduling problem~\cite{bencheikh2}. The same authors presented an ant colony algorithm along with a new heuristic to adjust the landing times of the aircraft in a given landing sequence in order to reduce the total penalty cost, in $2011$~\cite{bencheikh}. In $2012$, a hybrid meta-heuristic algorithm was suggested using simulated annealing with variable neighbourhood search and variable neighbourhood descent~\cite{salehipour}.

\section{Problem Formulation}\label{probform}

In this section we give the mathematical formulation of the static aircraft landing problem based on~\cite{pinol}. We also define some new parameters which are later used in the presented algorithm in the next sections.

\noindent
Let,\\
$N$ = the number of aircraft, \\
$E_{i}$ = the earliest landing time for aircraft $i$, $i = 1,2,\dots,N$,  \\
$L_{i}$ = the latest landing time for aircraft $i$, $i = 1,2,\dots,N$,  \\
$T_{i}$ = the target landing time for aircraft $i$, $i = 1,2,\dots,N$,  \\
$ST_{i}$ = the scheduled landing time for aircraft $i$, \\
$S_{i,j}$ = the required separation time between planes $i$ and $j$, where plane $i$ lands before plane $j$ on the same runway, $i \neq j$, \\
$s_{i,j}$ = the required separation time between planes $i$ and $j$, where plane $i$ lands before plane $j$ on different runways, $i \neq j$, \\
$g_{i}$ = the penalty cost per time unit associated with plane $i$ for landing before $T_{i}$, \\ 
$h_{i}$ = the penalty cost per time unit associated with plane $i$ for landing after $T_{i}$, \\ 
$\alpha_{i}$ = earliness (time) of plane $i$ from $T_{i}$, $\alpha_{i}=\max\{0,T_{i}-ST_{i}\}$, $i = 1,2,\dots,N$, \\
$\beta_{i}$ = tardiness (time) of plane $i$ from $T_{i}$, $\beta{i}=\max\{0,ST_{i}-T_{i}\}$, $i = 1,2,\dots,N$\;.\\ \\

The total penalty corresponding to any aircraft $i$ is then expressed as $\alpha_{i}g_{i}+\beta_{i}h_{i}$. If aircraft $i$ lands at its target landing time then both $\alpha_{i}$ and $\beta_{i}$ are equal to zero and the cost incurred by its landing is equal to zero. However, if aircraft $i$ does not land at $T_{i}$, either $\alpha_{i}$ or $\beta_{i}$ is non-zero and there is a strictly positive cost incurred. The objective function of the problem can now be defined as
\begin{equation}\label{ob}
\min \sum\limits_{i=1}^{N} (\alpha_{i}g_{i}+\beta_{i}h_{i})\;.
\end{equation}

\section{The Exact Algorithm}
In this section we present our exact polynomial algorithm for the aircraft landing problem with a special case of the safety constraint for the single runway case. Before stating the algorithm we first present some new parameters, definitions and lemmas which are useful for its explanation. We first define $D_{i}$ as the algebraic deviation of the scheduled landing time of plane $i$ from its target landing time, $D_{i} = ST_{i}-T_{i}$, $i=1,2,\dots,N$. We also define $ES_{i}$ as the minimum of extra separation times maintained between plane $i$ and all its preceding planes, and the deviation from its earliest landing time, for $i>1$. For $i=1$ we define $ES_{i}$ as the deviation of its scheduled landing time with its earliest landing time, as there are no planes landing before the first plane. Mathematically, $ES_{i}$ can be written as 
\begin{equation}\label{est}
ES_{i}=
\begin{cases}
 ST_{i}-E_{i}, & \mbox{if }i=1, \\
ST_{i}- \max\{SP_{i},E_{i}\}, &  \mbox{if }2 \leq i \leq  N,
\end{cases}
\end{equation}
where,
\begin{equation}\label{sp}
SP_{i}= \max_{j=1,2, \dots, i-1} (ST_{j} + S_{j,i}) \;.
\end{equation}

Here, $SP_{i}$ is the time at which an aircraft $i$ can land maintaining the safety constraint with all its preceding planes.
Let $P$ be a given landing sequence of the planes where the $i$th plane in this sequence is denoted by ${i}$, $i=1,2,\dots, N$. Note that without loss of any generality we can assume this, since the planes can be ranked in any sequence as per their order of landing.

\begin{lemma}\label{1lemma}
If the initial assignment of the landing times of all the aircraft in any landing sequence for a single runway is done according to $ST_{i}$, where 
\begin{equation}\label{lemma1}
ST_{i} = 
\begin{cases}
L_{i}, & \mbox{if } i=N \\ 
\min\{PS_{i},L_{i}\} & \mbox{if } 1 \leq i \leq N-1,
\end{cases}
\end{equation}
where,
\begin{equation}\label{ps}
PS_{i}= \min_{j=i+1,i+2, \dots, N} (ST_{j} - S_{i,j}),
\end{equation}
then the optimal solution can be obtained only by reducing the landing times of the aircraft while respecting the constraints or leaving the landing times unchanged.
\end{lemma}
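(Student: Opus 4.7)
The initialization in (\ref{lemma1}) processes planes from last to first and pushes each plane as late as possible, subject to its latest allowed time $L_i$ and the separation constraints with all later planes already placed at their (already maximal) times. The natural formalization of the lemma is therefore: for every feasible schedule $ST'$ of the given sequence, the schedule $ST$ from (\ref{lemma1}) satisfies $ST_i \geq ST'_i$ for all $i$. Once this is established, any optimal schedule $ST^*$ (which is a particular feasible schedule) lies componentwise below $ST$, so starting from $ST$ one can reach $ST^*$ by only decreasing landing times or keeping them fixed, which is exactly what the lemma claims.

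The plan is to prove the dominance inequality $ST_i \geq ST'_i$ by backward induction on $i$, starting at $i=N$. In the base case $ST_N = L_N \geq ST'_N$ by feasibility of $ST'$. In the inductive step, assume $ST_j \geq ST'_j$ for every $j > i$. Using (\ref{ps}) and the induction hypothesis, I get
\begin{equation*}
PS_i \;=\; \min_{j>i}\bigl(ST_j - S_{i,j}\bigr) \;\geq\; \min_{j>i}\bigl(ST'_j - S_{i,j}\bigr) \;\geq\; ST'_i,
\end{equation*}
where the last inequality uses the separation feasibility of $ST'$. Combined with $L_i \geq ST'_i$, this yields $ST_i = \min\{PS_i, L_i\} \geq ST'_i$, completing the induction.

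Before drawing the conclusion, I will briefly verify that $ST$ as defined by (\ref{lemma1}) is itself a feasible schedule (so that the lemma's premise makes sense): the construction enforces $ST_i \leq L_i$ and $ST_j \geq ST_i + S_{i,j}$ for all $j > i$ directly from (\ref{ps}); the lower bounds $ST_i \geq E_i$ follow from the assumption that the given landing sequence is feasible together with the dominance $ST_i \geq ST'_i \geq E_i$ applied to any witness feasible schedule $ST'$. Hence $ST$ is feasible and dominates every feasible schedule componentwise, and in particular dominates every optimal schedule.

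The only subtle point, and the one I expect to require a careful word, is the induction step: it relies on chaining the induction hypothesis $ST_j \geq ST'_j$ with the separation inequality of $ST'$ (not of $ST$) to bound $PS_i$ from below by $ST'_i$. Once this chaining is written out cleanly, the rest of the argument is essentially bookkeeping, and the conclusion --- that an optimum can be obtained from $ST$ by only non-increasing updates --- follows immediately from componentwise dominance.
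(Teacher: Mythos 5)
Your proposal is correct and rests on the same idea as the paper's own proof: the initialization of Equation~\eqref{lemma1} schedules every plane as late as feasibility permits, so every feasible (in particular every optimal) schedule lies componentwise at or below it, and the optimum is therefore reachable only by non-increasing updates. The only difference is one of rigor: you make explicit, via backward induction chaining the hypothesis $ST_j \geq ST'_j$ with the separation inequalities of the comparison schedule $ST'$, the componentwise dominance claim that the paper merely asserts informally (``increasing the landing time of any aircraft will cause infeasibility''), which is a worthwhile tightening but not a different route.
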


\begin{proof}
Equation~\eqref{lemma1} schedules the landing times of the aircraft in the reverse order starting from the last plane to the first plane in the landing sequence. The last plane is assigned a landing at its latest landing time $L_{N}$ and any of the preceding planes are assigned as late as possible from their target landing time, while maintaining the safety distance constraint. This is ensured by $\min\{PS_{i},L_{i}\}$, where $L_{i}$ is the latest landing time of aircraft $i$ and $PS_{i}$ is the closet landing time possible for aircraft $i$ to all its following aircraft. We define $PS_{i}$ as $PS_{i}= \displaystyle \min_{j=i+1,i+2, \dots, N} (ST_{j} - S_{i,j})$ where any plane $i$ maintains the safety distance with all its following planes. 

If any of the aircraft lands outside its time window with this assignment, then it shows that this landing sequence in infeasible. Since the landings times are assigned as close as possible to their latest landing times, increasing the landing time of any aircraft will cause infeasibility as the last aircraft is landing at its latest landing time and all the preceding planes are scheduled as close as possible. Hence, the optimal solution can be obtained only by decreasing the landing times or leaving them unchanged if there is no reduction possible. 
\end{proof}

Given this initialization, it is possible to reduce the landing times straight away to improve the solution as is depicted in Algorithm~\ref{improve}. Let the initial landing times of the aircraft be assigned according to Equation~\eqref{lemma1} for any given feasible landing sequence. If any aircraft $i$ with $i=1,2, \dots, N$, has a positive deviation $D_{i}$ from its target landing time and maintains a positive extra safety separation $ES_{i}$, then we can decrease the landing time $ST_{i}$ by $\min\{D_{i},ES_{i}\}$. The reason is, this reduction of the landing time is independent of other aircraft as we do not disturb the safety constraint and reduce the landing time of ${i}$ only to bring it closer to its target landing time, which is the requirement of Equation~\eqref{ob}. If $D_{i}>ES_{i}$ then we reduce the landing time by $ES_{i}$ so as to maintain the safety constraint and if $D_{i}<ES_{i}$ then we reduce the landing time to its target landing time. 

However, if the value of $D_{i}\leq 0$ for all the aircraft, then there is no improvement possible and Equation~\eqref{lemma1} is the optimal assignment for this landing sequence with respect to Equation~\eqref{ob}. Note that $ES_{i}\geq 0$ $\forall$ $i$, since the safety distance constraint is always maintained. We present this improvement of the initial landing times for the single runway case in Algorithm~\ref{improve}. We would like to point out that Algorithm~\ref{improve} will not necessarily return the optimal solution but only fetch an improvement to the initial assignment of the landing times.

\begin{algorithm}\caption{Improvement of Individual Landing Times}\label{improve}
\begin{algorithmic}[1]
\STATE Initialization: Equation~\eqref{lemma1}
\STATE $i \leftarrow 1$
\WHILE{$i \neq N+1$}
	\STATE \textbf{Compute} $D_{i}, ES_{i}$
	\IF {$(D_{i} > 0)$}
		\STATE $ST_{i} \leftarrow ST_{i} - \min\{D_{i},ES_{i}\}$
		\STATE \textbf{Update} $D_{i}, ES_{i}$
	\ENDIF
	\STATE $i \leftarrow i+1$
\ENDWHILE
\RETURN $ST,ES,D$
\end{algorithmic}
\end{algorithm}

\begin{lemma}\label{2lemma}
Implementation of Algorithm~\ref{improve} will yield either one of the below mentioned cases for any aircraft ${i}, i=1,2,3,\dots, N$:

\begin{equation}\label{xx}
\begin{split}
\text{a)   }		 D_{i}>0, ES_{i}=0,    &\quad  \text{  b)   }		 D_{i}=0, ES_{i}>0,  \\
\text{c)   }		 D_{i}=0, ES_{i}=0,    &\quad  \text{  d)   }		 D_{i}<0, ES_{i}=0,  \\
\text{e)   }		 D_{i}<0, ES_{i}>0. &\quad 
\end{split}
\end{equation}
\end{lemma}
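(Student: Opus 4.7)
The plan is to observe that after Algorithm~\ref{improve} completes its iteration for aircraft $i$, the quantities $D_i$ and $ES_i$ are frozen for the remainder of the execution. Indeed, from the definitions $D_i = ST_i - T_i$ and Equation~\eqref{est}, both depend only on $ST_i$ and on $ST_j$ for $j<i$; since the while-loop processes indices in increasing order and only modifies $ST_i$ during iteration $i$, subsequent iterations cannot alter them. Therefore it suffices to classify $(D_i, ES_i)$ at the moment iteration $i$ ends. Moreover, $ES_i \geq 0$ always, as the initialization of Lemma~\ref{1lemma} (applied to a feasible sequence) and the monotone reductions preserve the safety constraint. This means no case with $ES_i < 0$ can arise, so the only configuration we must exclude is $D_i > 0$ and $ES_i > 0$ simultaneously.

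The key technical point is that a reduction of $ST_i$ by an amount $\delta$ decreases both $D_i$ and $ES_i$ by exactly $\delta$. This follows directly from $D_i = ST_i - T_i$ and from Equation~\eqref{est}, whose right-hand side $ST_i - \max\{SP_i, E_i\}$ is linear in $ST_i$ (note that $SP_i$ depends only on predecessors, which are unchanged). Hence the algorithmic step $ST_i \leftarrow ST_i - \min\{D_i, ES_i\}$ sends $(D_i, ES_i)$ to $(D_i - \delta,\; ES_i - \delta)$ with $\delta = \min\{D_i, ES_i\}$, which forces at least one of the two updated quantities to equal zero.

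From here the proof is a routine case split. If upon entering iteration $i$ we have $D_i \leq 0$, the IF branch is skipped and $ES_i \geq 0$ remains; this gives cases (b), (c), (d), or (e) according to the signs. If $D_i > 0$, the reduction fires with $\delta = \min\{D_i, ES_i\}$: when $D_i > ES_i$ the new values are $(D_i - ES_i,\, 0)$ with $D_i - ES_i > 0$, yielding case (a); when $D_i = ES_i$ we land in case (c); and when $D_i < ES_i$ we land in case (b). In every branch the forbidden configuration $(D_i > 0, ES_i > 0)$ is ruled out, precisely because the reduction step collapses the minimum of the two to zero.

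The only nontrivial step is the linearity observation that the update decreases $D_i$ and $ES_i$ by the same amount; once this is stated, the rest is bookkeeping. I would also briefly remark that the reduction of $ST_i$ can only decrease $SP_j$ for $j>i$ and hence can only increase $ES_j$ for later indices, so feasibility (and thus $ES_j \geq 0$) is preserved throughout the loop, justifying the global nonnegativity of $ES$ used above.
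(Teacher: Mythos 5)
Your proof is correct, and it organizes the argument differently from the paper's. The paper proves Lemma~\ref{2lemma} by a five-way case analysis on where the initialized landing time $ST_{i}$ falls in its window ($ST_{i}=E_{i}$, $E_{i}<ST_{i}<T_{i}$, $ST_{i}=T_{i}$, $T_{i}<ST_{i}<L_{i}$, $ST_{i}=L_{i}$), checking in each case which of (a)--(e) can result. You instead observe that, since $ES_{i}\geq 0$ is invariant, the only sign pattern to exclude is $D_{i}>0$ together with $ES_{i}>0$, and that the update $ST_{i}\leftarrow ST_{i}-\min\{D_{i},ES_{i}\}$ decreases $D_{i}$ and $ES_{i}$ by exactly the same amount (both are affine in $ST_{i}$ with the predecessors fixed, per Equation~\eqref{est}), hence zeroes out at least one of them. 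This is the same mechanism that drives the paper's Cases 4 and 5, but your framing makes the lemma's content transparent and avoids the window-position taxonomy. You also supply two justifications the paper leaves implicit: that $D_{i}$ and $ES_{i}$ are frozen once iteration $i$ of the loop in Algorithm~\ref{improve} completes (later iterations only touch $ST_{j}$ for $j>i$), and that reducing $ST_{i}$ can only decrease $SP_{j}$ and hence only increase $ES_{j}$ for $j>i$, so nonnegativity of $ES$ is preserved across the whole loop. As a minor bonus, your sign-based split handles degenerate situations uniformly, whereas the paper's Case 5 asserts $D_{i}>0$ and $ES_{i}>0$ whenever $ST_{i}=L_{i}$, which can fail in edge cases such as $T_{i}=L_{i}$.
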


\begin{proof}
The initialization of the landing times using Equation~\eqref{lemma1} can assign the landing time to any aircraft $i$ anywhere in its time window, if the landing sequence is feasible. Hence, we have the following five cases:\\

\noindent
\textbf{Case 1: $ST_{i} = E_{i}\;.$} \\
If $i=1$, then $D_{i}<0$ and $ES_{i}=0$ from Equation~\eqref{est}. If $i>1$ then $D_{i}<0$ but we need to check for the value of $ES_{i}$. According to Equation~\eqref{est}, we have $ES_{i} \leftarrow ST_{i}-\max\{SP_{i}, E_{i}\}$. Note that $ST_{i}\geq SP_{i},$ $i=2,3,\dots, N$ since the safety separation is always maintained between any two aircraft landing consecutively. This implies that we can write $\max\{SP_{i}, E_{i}\}=ST_{i}$ due to the case constraint, {\em i.e.\ \/}$E_{i}=ST_{i}$. Hence, we have $ES_{i}=0$ from its definition. Since a reduction in the landing time is possible only if $D_{i} > 0$, the values of $D_{i}$ and $ES_{i}$ will remain unchanged by the implementation of Algorithm~\ref{improve}, satisfying Case~\emph{d}.\\

\noindent
\textbf{Case 2: $E_{i} <  ST_{i} < T_{i}\;.$} \\
$D_{i}<0$ for any $i$, which means that the landing time for aircraft ${i}$ will remain unchanged. If $i=1$, then $ES_{i}>0$ from Equation~\eqref{est}. If $i>1$ then again from Equation~\eqref{est} we can deduce that $ES_{i}\geq 0$ because $ST_{i} \geq SP_{i}$ (safety constraint) and $ST_{i}>E_{i}$ (case constraint). This proves that if the initial landing time for any aircraft lies between $E_{i}$ and $T_{i}$ then Algorithm~\ref{improve} will not fetch any reduction hence satisfying Case~\emph{d} or~\emph{e} of Lemma~\ref{2lemma}.\\

\noindent
\textbf{Case 3: $ST_{i} =  T_{i}\;.$} \\
$D_{i}=0$ for any $i$ since the landing occurs at the target landing time. And $ES_{i} \geq 0$ for any $i$, by the same reasons as in Case~\emph{2}. In this case as well there will be no reduction and Case~\emph{b} or~\emph{c} of Lemma~\ref{2lemma} is satisfied.\\

\noindent
\textbf{Case 4: $T_{i} <  ST_{i} < L_{i}\;.$} \\
If the initial landing time for any aircraft ${i}$ lies between $T_{i}$ and $L_{i}$, then $D_{i}>0$ by definition and $ES_{i}\geq 0$ because $ST_{i}>E_{i}$ and $ST_{i} \geq SP_{i}$. Hence, Algorithm~\ref{improve} will reduce the landing time of plane ${i}$ by $\min\{D_{i},ES_{i}\}$. If $\min\{D_{i},ES_{i}\}=D_{i}$ then the reduction in the landing time will fetch $D_{i}=0$ and $ES_{i}>0$, satisfying Case~\emph{b}. If $\min\{D_{i},ES_{i}\}=ES_{i}$ then the reduction in the landing time will fetch $D_{i}>0$ and $ES_{i}=0$, satisfying Case~\emph{a}. However, if after the initialization the values of $D_{i}$ and $ES_{i}$ are equal then the implementation of Algorithm~\ref{improve} will fetch $D_{i}=0$ and $ES_{i}=0$, satisfying Case~\emph{c}. Finally, if $ES_{i}=0$ then there will be effectively no reduction because $\min\{D_{i},ES_{i}\}$ will be equal to zero and Case~\emph{a} of Lemma~\ref{2lemma} will be satisfied.\\

\noindent
\textbf{Case 5: $ST_{i} =  L_{i}\;.$} \\
$D_{i}>0$ and $ES_{i}>0$ after the initialization and yet again the Algorithm~\ref{improve} will fetch either one of Case~\emph{a},~\emph{b} or~\emph{c}, with the same arguments as in Case $4$.
\end{proof}

We now give some additional definitions necessary for the understanding of our main algorithm.
\begin{Definition}\label{pl}
$PL$ is a vector of length $N$ and any element of $PL$ ($PL_{i}$) is the net penalty possessed by any aircraft ${i}$, $i=1,2,\dots, N$. We define $PL_{i}$, $i=1,2,\dots, N$, as
\begin{equation}
PL_{i} =
\begin{cases}
	-g_{i}, & \mbox{if }D_{i} \leq 0 \\
	h_{i}, & \mbox{if }D_{i} > 0\;.
\end{cases}
\end{equation}
\end{Definition}

With the above definition we can now express the objective function stated in Equation~\eqref{ob} in a compact form as
\begin{equation}\label{obn}
\min \sum\limits_{i=1}^{N} (D_{i}\cdot PL_{i})\;.
\end{equation}

\begin{Definition}\label{sigma}
Let $i$ be any aircraft landing at $ST_{i}$ then we define $\sigma(i)$ as the algebraic deviation of the landing time of aircraft $i$ from its earliest landing time $E_{i}$. Mathematically, $\sigma(i)=ST_{i}-E_{i}$, $i=1,2,\dots, N$.
\end{Definition}

\begin{Definition}\label{mu}
Let aircraft $({i},{i+1},\dots, {j})$ be the aircraft in any given sequence which land consecutively in this order on the same runway, we define $\mu$ such that $\mu$ is the last plane in $(i,i+1,\dots,j)$ with $D_{\mu} \leq 0$.
\end{Definition}

\begin{Definition}\label{rsa}
We define $\varGamma=\{({i_{1}}:{j_{1}}),({i_{2}}:{j_{2}}),\dots,  ({i_{c}}:{j_{c}})\}$ as the sets of aircraft which land consecutively, where $c$ is the total number of sets in $\varGamma$ and $1\leq i_{1} < j_{1} < i_{2} < j_{2} < \dots < i_{c} < j_{c}\leq N$. And for any set $\varGamma(k)=({i_{k}}:{j_{k}})$, where the aircraft ${i_{k}},{i_{k}+1},\dots, {j_{k}}$ land one after another consecutively on the same runway, the following properties hold:
\begin{equation}\label{eqrsa}
\begin{cases}
ES_{i_{k}} > 0, \\
ES_{m}=0,   m={i_{k}+1},\dots, {j_{k}} \\
\sum\limits_{\rho={i_{k}}}^{j_{k}} PL_{\rho} >0 \\
\sigma(\rho) >0, \rho={i_{k}},\dots, {j_{k}} \text{ and} \\
\sum\limits_{\rho={\mu}}^{j_{k}} PL_{\rho} >0 \text{ if $\mu$ exists for $\varGamma(k)$\;.}\\
\end{cases}
\end{equation}
\end{Definition}

\begin{Definition}\label{sng}
We define $SNG(X_{i:j})$ as the smallest non-negative number in vector $X$ from elements $X_i$ to $X_j$, $(i<j)$.
\end{Definition}

With the above concepts and definitions we present our main algorithm (Algorithm~\ref{main}) for optimizing a given landing sequence $P$ on a single runway for the special case of the ALP when the safety constraint for any aircraft is to be maintained only with its preceding plane. In other words, when $SP_{i}=ST_{i-1}+S_{i-1,i}$ and $PS_{i}=ST_{i+1}-S_{i,i+1}$. For the general case of the safety constraint the algorithm still returns a feasible solution but not necessarily optimal. Later we show with our results that this special case of the safety constraint holds for several instances and we obtain optimum results for many instances. Moreover we also obtain better results than the best known solutions for several instances.

\begin{algorithm}\caption{Main Algorithm: Single Runway}\label{main}
\begin{algorithmic}[1]
\STATE Apply Algorithm~\ref{improve}
\STATE Calculate $PL,\varGamma,c,\sigma$
\STATE $Sol \leftarrow \sum\limits_{i=1}^{N} (D_{i}\cdot PL_{i})$
\WHILE{$\varGamma \neq \varnothing$}
\FOR {$k = $ 1 \textbf{to} $c$}
\STATE $({i_{k}},{j_{k}}) \leftarrow \varGamma(k)$
\STATE $\gamma=\min\limits_{\rho={i_{k}},\dots, {j_{k}}}\sigma(\rho)$
\STATE $pos = \min(SNG(D_{\varGamma(k)}), ES_{i_k},\gamma)$
\FOR {$p = {i_{k}}$ \textbf{to} ${j_{k}}$}
\STATE $ST_{p}\leftarrow ST_{p}-pos$
\STATE $D_{p}\leftarrow D_{p}-pos$
\ENDFOR
\STATE $ES_{i_{k}}\leftarrow ES_{i_{k}}-pos$
\IF {$j_k<N$}
\STATE $ES_{j_{k}+1}\leftarrow ES_{j_{k}+1}+pos$
\ENDIF
\STATE Update $PL,\sigma$
\ENDFOR
\STATE $Sol \leftarrow \sum\limits_{i=1}^{N} (D_{i}\cdot PL_{i})$
\STATE Calculate $\varGamma,c$
\ENDWHILE
\RETURN $Sol$
\end{algorithmic}
\end{algorithm}

\section{Proof of Optimality}
\begin{theorem}\label{theorem}
Algorithm~\ref{main} returns the optimal value for Equation~\eqref{obn} for any given feasible landing sequence on a single runway when $SP_{i}=ST_{i-1}+S_{i-1,i}$ for $i=2,3,\dots, N$ and $PS_{i}=ST_{i+1}-S_{i,i+1}$ for $i=1,2,\dots, N-1$.
\end{theorem}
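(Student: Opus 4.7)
My plan is to prove Theorem~\ref{theorem} in three movements: first, that Algorithm~\ref{main} preserves feasibility and monotonically decreases the objective~\eqref{obn}; second, that the set $\varGamma$ characterises exactly the improving moves available at the current schedule; and third, that the terminal condition $\varGamma=\varnothing$ implies global optimality. The $O(N^3)$ complexity will then follow from a counting argument on block events.

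For the first movement I will verify that each pass through the inner \textbf{for} loop keeps the schedule feasible and decreases the objective by exactly $pos\cdot\sum_{\rho=i_k}^{j_k} PL_\rho$. Feasibility follows because in the special case $SP_i=ST_{i-1}+S_{i-1,i}$ only the interface between the block $(i_k:j_k)$ and plane $i_k-1$ can be violated by a common left-shift, and $pos\le ES_{i_k}$ prevents this; $pos\le\gamma$ keeps every plane in the block at or above its earliest time; and the internal spacings are unchanged under a rigid shift. The cost accounting uses the compact form~\eqref{obn} together with the observation that $-\sum PL_\rho$ is the directional derivative of the objective along the block-left-shift direction, valid precisely while $pos\le SNG(D_{\varGamma(k)})$, which stops the shift before any $PL_\rho$ flips sign.

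The second, structural, movement is where the main work lies. After Algorithm~\ref{improve}, Lemma~\ref{2lemma} already rules out every single-plane improvement. Under the special separation regime only consecutive planes are safety-coupled, so a reduction of $ST_i$ when $ES_i=0$ forces the same reduction on $ST_{i-1}$ and, by induction backwards, on every earlier plane until an index with $ES>0$ is encountered. Therefore every remaining improving direction is a rigid leftward shift of some maximal contiguous block. I will check that the four conditions in~\eqref{eqrsa} are exactly the necessary and sufficient conditions for such a block to admit a strictly improving shift: the $ES$ pattern identifies the block's structural boundaries, $\sigma(\rho)>0$ is earliest-time admissibility, $\sum PL_\rho>0$ is profitability in the linear regime, and the tail condition on $\mu$ rules out the degenerate situation in which only a shorter sub-block would be truly profitable.

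For the third movement I will argue by contradiction: if some feasible schedule $ST^\star$ attained a strictly smaller cost, then since~\eqref{ob} is convex and piecewise linear and the set of feasible schedules for the fixed sequence~$P$ is a polytope, some extreme direction of the tangent cone at the algorithm's output would be strictly improving. Every such extreme direction decomposes into either a single-plane perturbation (ruled out by Lemma~\ref{2lemma}) or a rigid left-shift of a contiguous block satisfying~\eqref{eqrsa} (ruled out by $\varGamma=\varnothing$), contradicting the hypothesis. Termination and the $O(N^3)$ bound then follow because every outer \textbf{while} iteration triggers at least one block event --- a $PL_\rho$ flip, an $ES$ or $\sigma$ reaching zero, or a merger with the block to its left --- and at most $O(N^2)$ such events can occur, while each iteration does $O(N)$ work per block over $O(N)$ blocks. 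The hardest step I expect is the convex-analytic part of the third movement: ruling out ``mixed'' improving directions that push some planes left while pushing others right. This will rest on the fact that Algorithm~\ref{improve} leaves every plane already equilibrated against individual rightward moves, so any such mixed direction can be decomposed, via convexity, into a leftward block shift plus a harmless or unfavourable individual rightward component.
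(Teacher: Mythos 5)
Your first two movements track the paper's own proof closely: the paper likewise relies on the late initialization of Lemma~\ref{1lemma}, the single-plane equilibration of Algorithm~\ref{improve} and Lemma~\ref{2lemma}, and then argues that a further improvement exists if and only if $\varGamma\neq\varnothing$, with the block left-shift by $pos=\min(SNG(D_{\varGamma(k)}),ES_{i_k},\gamma)$ being feasible and strictly improving for exactly the reasons you give (the $ES$ pattern forces rigidity of the block under the special separation regime, $\gamma$ guards the earliest-time bounds, and $SNG$ prevents any $PL_\rho$ from flipping sign mid-shift). Where you genuinely depart from the paper is the terminal optimality argument: the paper handles $\varGamma=\varnothing$ by an informal case-by-case check that negating each condition of Definition~\ref{rsa} individually kills the corresponding block move, whereas you propose a convex-analytic argument on the polytope of feasible schedules. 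Your route is the more principled one --- it is the only way to actually rule out improving perturbations that are not of the two canonical forms --- but it is also where your sketch has a concrete gap.

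The gap is this: for a convex \emph{piecewise-linear} objective such as~\eqref{ob}, the directional derivative $d\mapsto f'(ST;d)$ is sublinear, not linear, so $f'(ST;\sum_i\lambda_i d_i)\leq\sum_i\lambda_i f'(ST;d_i)$ with the inequality pointing the wrong way for your argument. Nonnegativity of the directional derivative on the extreme rays of the tangent cone therefore does \emph{not} imply nonnegativity on all feasible directions (consider $\max(-x,-y)$ at the origin: both coordinate directions are non-improving but $(1,1)$ is improving). The kinks occur precisely at $ST_i=T_i$, i.e.\ at the schedules your algorithm is steering toward, so this is not a corner case. You anticipate the problem when you flag ``mixed'' directions as the hardest step, but the proposed fix --- decomposing a mixed direction into a block left-shift plus a harmless rightward component ``via convexity'' --- is exactly the assertion that needs proof and is not supplied. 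To close it you would need to exploit the separability of the objective and the chain structure of the constraints $ST_{i}\geq ST_{i-1}+S_{i-1,i}$ directly (in the style of fixed-sequence earliness--tardiness scheduling arguments), showing that any improving feasible direction can be replaced by an improving direction supported on a single contiguous block with a constant sign. In fairness, the paper's own converse argument never confronts mixed or simultaneous multi-block perturbations at all, so your outline identifies, even if it does not yet repair, a weakness shared by the published proof.
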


\begin{proof}
The initialization of the landing times for any sequence is done according to Lemma~\ref{1lemma}. It allocates the landing times as late as possible, hence the solution can be improved only by reducing the landing times. Thereafter we show that for any aircraft ${i}$ we can reduce its landing time straight away, independent of other aircraft, if $D_{i}>0$ and $ES_{i}>0$. The reason is, if there is an extra safety separation between ${i-1}$ and ${i}$ as well as a positive deviation from the target landing time, then the reduction of $ST_{i}$ by $\min\{D_{i},ES_{i}\}$ will bring aircraft ${i}$ closer to $T_{i}$ and hence yield an overall reduction in the total weighted tardiness thereby improving the overall solution. Note that this reduction will neither cause any aircraft to land earlier than its target landing time nor will it disrupt the safety separation. The implementation of Algorithm~\ref{improve} will fetch one of the five possibilities to all the aircraft, mentioned and proved in Lemma~\ref{2lemma}.

The next step is to prove that a further improvement to the solution is possible iff $\varGamma \neq \varnothing$. If $\varGamma \neq \varnothing$, then we have $ES_{i_k}>0$, $ES_{m}=0,  (m={i_{k}+1},\dots, {j_k})$, $\sum_{\substack{\rho=i_k}}^{j_k} PL_{\rho} >0$, $\sigma(\rho)>0$ where $\rho$ are all the planes in the set $\varGamma(k)$ and $\sum_{\substack{\rho=\mu}}^{j_{k}} PL_{\rho} >0$, if $\mu$ exists. We have $ES_{i_k}>0$ and $ES_{m}=0,  (m={i_{k}+1},\dots, {j_k})$. Reducing the landing time of any aircraft in $m$ will cause infeasibility as it will disrupt the safety constraint since $ES_{m}=0$. But reducing the landing times of all the aircraft from ${i_k}$ to ${j_k}$ by $pos=\min(SNG(D_{\varGamma(k)}), ES_{i_k},\gamma)$ will not cause any infeasibility for two reasons. First, the definition of $\varGamma(k)$ ensures that all the planes have a positive deviation from their earliest landing times since $\sigma(\rho)>0$ and the reduction of the landing times by $pos$ will not cause any infeasibility since all the aircraft in set $\varGamma(k)$ will be allocated a landing time within their time window since $pos\leq \gamma$. Second, we would reduce all the landing times by the same amount and not more than $ES_{i_k}$. This will maintain the safety separation between all the aircraft in $\varGamma(k)$ and also the required separation between aircraft ${i_{k}-1}$ and ${i_k}$. 

Notice that $PL_{\rho}$ is the net penalty of aircraft $\rho$ as stated in Definition~\ref{pl}. Hence a positive value for the summation of the net penalties of aircraft ${i_{k}}$ to ${j_k}$ landing consecutively means, that the total tardiness penalty is higher than the total earliness penalty and an increase in the landing times of all the aircraft in $\varGamma(k)$ by the same amount is only going to worsen the solution. As for $\mu$, let's say there exists a $\mu$ for the set $\varGamma(k)$ such that $\sum_{\substack{\rho=\mu}}^{j_{k}} PL_{\rho} < 0$. This shows that aircraft ${\mu}$ to ${j_{k}}$ already possess a net earliness penalty and further reducing their landing times will fetch an increase in the overall penalty. However, $\sum_{\substack{\rho=i_k}}^{j_k} PL_{\rho} >0$ means that $\sum_{\substack{\rho=i_k}}^{\mu -1} PL_{\rho} >0$ which implies that aircraft ${i_{k}}$ to ${\mu-1}$ possess a net positive tardiness penalty. Thus, a reduction in landing times by $\min(SNG(D_{i_{k}:{\mu-1}}),ES_{i_k},\gamma)$ will reduce the total weighted tardiness as well as ensure that the increase in the earliness penalty (if any) of aircraft ${i_{k}}$ to ${\mu-1}$ does not exceed the reduction in the net tardiness penalty and thereby reducing the overall penalty. In such a case $\varGamma(k)$ will become $({i_{k}}:{\mu-1})$.

Conversely, if $\varGamma = \varnothing$, then either one of the cases will not hold in Definition~\ref{rsa}. We prove this by contradiction for all these cases:\\

\noindent
\textbf{Case 1:} $ES_{i_{k}} > 0 \;.$ \\
If $ES_{i_{k}} = 0 $ and all the other conditions hold then there is no scope of reduction and an increase in the landing times will only worsen the solution. Note that $ES_{i_{k}}$ will never be negative, for any $i_{k}$, $i_{k}=1,2,\dots, N-1$.\\

\noindent
\textbf{Case 2:} $ES_{m}=0,m={i_{k}+1},\dots, {j_{k}}\;.$ \\
If $ES_{m} \neq 0,(m={i_{k}+1},\dots, {j_{k}})$
then we have two cases. One, if for some $m$, $ES_{m}<0$, then the solution is infeasible. Second, if for some $m$, $ES_{m}>0$ then it contradicts the definition of $\varGamma$.\\

\noindent 
\textbf{Case 3:}
$\sigma(\rho) >0, \rho={i_{k}},\dots, {j_{k}}\;.$ \\
If the value of $\sigma(\rho) = 0$, then a reduction of the landing times for all the planes in the set $\varGamma(k)$ by any positive value will make the solution infeasible since the aircraft ${\rho}$ is already landing at its earliest landing time. Note that the value of $\sigma(\rho)$ cannot be negative for any aircraft ${\rho}$ at any stage.\\

\noindent 
\textbf{Case 4:}
$\sum_{\substack{\rho=i_{k}}}^{j_{k}} PL_{\rho} >0\;.$ \\
If $\sum_{\substack{\rho=i_{k}}}^{j_{k}} PL_{\rho}  = 0$ for any plane ${\rho}$, then any change to the landing times of all the aircraft in $\varGamma(k)$ will only worsen the solution by increasing the total lateness penalty or the total earliness penalty. If $\sum_{\substack{\rho=i_{k}}}^{j_{k}} PL_{\rho}  < 0$, then the reduction of landing times is again going to worsen the solution as the total earliness penalty is already higher than the total lateness penalty. Moreover, an increase in the landing time is not good either, because it will only take us back to an earlier step where $\sum_{\substack{\rho=i_{k}}}^{j_{k}} PL_{\rho}  > 0$.\\

\begin{lemma}\label{pos}
If $\varGamma(k) \neq \varnothing$ then $pos$ exists and $pos>0$.
\end{lemma}

\begin{proof}
From Algorithm~\ref{main} we have, $pos=\min(SNG(D_{\varGamma(k)}), ES_{i_k},\gamma)$. So $pos$ will exist with a positive value only if $SNG(D_{\varGamma(k)})>0$, $ES_{i_{k}}>0$ and $\gamma>0$. Clearly, $ES_{i_{k}}>0$ from the definition of $\varGamma(k)$. Besides, $ES_m = 0$ for $m={i_{k}+1},\dots, {j_{k}}$ and $\sum_{\substack{m=i_{k}}}^{j_{k}} PL_{m} >0$ again from Equation~\eqref{rsa}. Note that we proved in Lemma~\ref{2lemma} that if $ES_i=0$ for any $i=1,2,3,\dots, N$ then $D_{i}\leq 0$. Moreover, $\sum_{\substack{m=i_{k}}}^{j_{k}} PL_{m} >0$ shows that for at least one aircraft $m$ in the $\varGamma(k)$ has $PL_{m}>0$. Recall from Equation~\eqref{pl} that for any aircraft $m$, $PL_{m}>0$ only if $D_{m}>0$. Thus we have $ES_{i_{k}}>0$ and $D_{m}>0$ at least for one aircraft $m$, where $m={i_{k}},{i_{k}+1},\dots, {j_{k}}$. Furthermore, if $\varGamma \neq \varnothing$ then obviously $\gamma>0$ since the $\gamma=\min\limits_{\rho={i_{k}},\dots, {j_{k}}}\sigma(\rho)$ and $\sigma(\rho)>0$ for all the aircraft in the set $\varGamma(k)$ from Equation~\eqref{eqrsa}. Hence, this proves that $pos$ will exist and will be greater than zero if $\varGamma(k)\neq \varnothing$.
\end{proof}

We reduce the landing times by $\min(SNG(D_{\varGamma(k)}), ES_{i_k},\gamma)$ because this will neither disrupt the safety constraint nor cause infeasibility. Besides, this will not alter the number of planes arriving early $(D_{m}<0)$. If we reduce the landing times by a greater quantity we will certainly reduce the lateness penalty but we might as well end up increasing the earliness penalty by a greater amount. Hence we do not want to change the number of planes arriving early. Notice that a reduction in the landing time for aircraft ${j_k}$ by $pos$ means that it will increase the extra safety separation between ${j_k}$ and ${j_{k}+1}$, which is why we have line $14$ in Algorithm~\ref{main}. Hence, to summarize, Algorithm~\ref{main} allocates the latest possible initial landing times to all the aircraft and then makes improvements to the solution at every step until there is no improvement possible.
\end{proof}

\section{Multiple Runways}
In this section we propose an effective approach for allocating the runways to all the aircraft in a given landing sequence. We do not prove the optimality of this approach but our results show that it is an effective strategy and performs better than other approaches mentioned in the literature. In the multiple runway case the only difference is the initial assignment of the runways to all the aircraft in a given sequence. We propose an initialization algorithm for the multiple runway case which again takes the input as a landing sequence of planes waiting to land and the number of runways $R$ at the airport. We make an assumption as in~\cite{pinol}, that if aircraft ${i}$ and $j$ land on different runways then $S_{i,j}=0$. Proposition~\eqref{prop} assigns the appropriate runway to all the aircraft and the landing sequence on each runway. Let ${A_{1r}, A_{2r},\dots, A_{nr}}$ be the sequence of planes on runway $r, r = 1,2,\dots, R$ and $nr$ be the number of planes landing on runway $r$.

\begin{proposition}\label{prop}
Assign the first $R$ air planes ${1},{2},\dots, {R}$, one on each runway at their respective target landing times. For any following aircraft ${i}, i=R+1,R+2,\dots, N$ assign the same runway as ${i-1}$ at a landing time of $T_{i}$ if $T_{i}$ is greater than or equal to the allowed landing time for plane $i$ by maintaining the safety distance constraint with all the preceding aircraft on the same runway. Otherwise, assign a runway $r$ at $T_{i}$ which offers zero deviation from $T_{i}$. If none of the above two conditions hold then select a runway which gives the least feasible positive deviation to plane ${i}$ from its target landing time.
\end{proposition}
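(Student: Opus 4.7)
The statement of Proposition~\ref{prop} is procedural rather than quantitative, so the right target of a proof is that the rule is well-defined and outputs a feasible multi-runway schedule whenever the given landing sequence admits one. I would not aim at global optimality, since the authors themselves decline to prove it, but I would give a clean feasibility and well-definedness argument together with an explicit greedy interpretation of the rule.

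The plan is induction on the position $i$ in the landing sequence. For $i\le R$, each plane is assigned to a distinct runway at its target time $T_i$; because cross-runway separations vanish by the standing assumption $s_{i,j}=0$ and $E_i\le T_i\le L_i$, the partial schedule is trivially feasible. For the inductive step, assume a feasible partial schedule on aircraft $1,\dots,i-1$ has been built. For each candidate runway $r$, let $ST_i^{(r)}$ denote the smallest time $\ge E_i$ that respects the separation constraints from all previously scheduled planes on $r$. I would then verify that the three branches of the proposition correspond exactly to the ordered preference: first take $r=r(i-1)$ if $ST_i^{(r)}\le T_i$ (so that $ST_i=T_i$ is legal), otherwise take any runway with $ST_i^{(r)}\le T_i$, and otherwise take $\arg\min_r ST_i^{(r)}$ and set $ST_i=ST_i^{(r)}$. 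Checking that each branch respects the separation constraint reduces to the definition of $ST_i^{(r)}$; checking the time window reduces to verifying $ST_i^{(r)}\le L_i$ in the last branch.

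The main obstacle is this last branch: one must show that if any feasible extension exists, then $\min_r ST_i^{(r)}\le L_i$, so that the rule does not get stuck. This follows from the observation that any feasible global schedule induces, by restriction to plane $i$, some runway $r^\star$ with separation-feasible time $\le L_i$, and $ST_i^{(r^\star)}$ is by construction no larger than this time. Once this is secured, the induction closes and the output is a feasible assignment on all $R$ runways, to which Algorithm~\ref{main} can then be applied per runway. I would finish by remarking that the greedy commitment made at step $i$ may foreclose a choice that a globally optimal schedule would exploit, so any stronger optimality claim would require an exchange argument over runway-partitions — which is precisely where the authors stop and rely on empirical evidence.
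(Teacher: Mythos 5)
The paper offers no proof of this proposition at all: it is presented purely as a heuristic assignment rule, and the authors explicitly state that they do not prove optimality (nor, in fact, feasibility), relying instead on empirical results. So your decision to target well-definedness and feasibility rather than optimality is a reasonable reading of what a proof of this statement could even mean, and it goes beyond anything the paper attempts. Your branch-by-branch reconstruction of the rule as a greedy preference order is also an accurate formalization of the prose.

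However, the step you yourself flag as the main obstacle contains a genuine gap. You argue that if any feasible extension exists, then $\min_r ST_i^{(r)}\le L_i$, by restricting a feasible global schedule to plane $i$ and noting that $ST_i^{(r^\star)}$ is no larger than the time plane $i$ lands there. But $ST_i^{(r^\star)}$ is computed relative to the \emph{greedy} partial schedule on runway $r^\star$, not relative to the restriction of the feasible global schedule. The greedy rule may have routed different planes to $r^\star$, or placed them at later times (each at its target or at the earliest separation-feasible instant after it), than the feasible global schedule does; consequently the last plane currently on $r^\star$ under the greedy construction can force $ST_i^{(r^\star)}>L_i$ even though a feasible completion of some \emph{other} partial schedule exists. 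In other words, the existence of a feasible schedule for the instance does not imply that the greedy prefix built by the proposition is extendable, and a greedy earliest-feasible-time rule of this kind can paint itself into a corner. To close the argument you would need either an exchange lemma showing the greedy prefix is always dominated by (i.e., no later, runway by runway, than) some feasible prefix, or you would need to weaken the claim to: the procedure either returns a feasible assignment or reports failure. The latter is all that the paper implicitly relies on, since Algorithm~2 is only applied per runway once a feasible per-runway sequence is in hand.
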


Here we make an obvious assumption that the number of air planes waiting to land is more than the number of runways present at the airport. The landing sequence in this proposition is maintained in the sense that any aircraft ${i}$ does not land before ${i-1}$ lands. Once we have this assignment of aircraft to runways, each runway has a fixed number of planes landing in a known sequence. Using this to our benefit, we can now apply Algorithm~\ref{main} to each runway separately. We would mention here that in this work we assume the safety separation time between aircraft landing on different runways to be equal to zero, {\em i.e.\ \/}$s_{A_{ir},A_{jr^{\prime}}}=0$, where $r$ and $r^{\prime}$ are two different runways. This assumption was also considered by Pinol~\emph{et al.} in~\cite{pinol}.

\section{Algorithm Run-Time Complexity}
\begin{lemma}
The run-time complexity of Algorithm~\ref{main} is $O(N^3)$ where $N$ is the total number of aircraft.
\end{lemma}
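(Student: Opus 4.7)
The plan is to decompose the total cost into (i) the initialization phase (lines 1--3 of Algorithm~\ref{main}), (ii) the cost of one pass of the outer \textbf{while} loop, and (iii) a bound on the number of outer iterations, and then show that $(\text{i}) + (\text{ii}) \cdot (\text{iii}) = O(N^{3})$.

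First I would bound the initialization. The backward assignment from Equation~\eqref{lemma1} computes $ST_{i}$ in $O(1)$ per aircraft under the special-case assumption $PS_{i} = ST_{i+1} - S_{i,i+1}$, and Algorithm~\ref{improve} sweeps forward once, also $O(1)$ per aircraft since $SP_{i} = ST_{i-1} + S_{i-1,i}$. Building $PL$ and $\sigma$ is a linear scan, and constructing $\varGamma$ can be done by a single left-to-right pass using prefix sums over $PL$ to evaluate the $\sum PL > 0$ and $\sum_{\mu}^{j_{k}} PL > 0$ conditions of Equation~\eqref{eqrsa}. So initialization runs in $O(N)$ and is dominated by later work. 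For a single outer iteration, the sets in $\varGamma$ are pairwise disjoint and satisfy $\sum_{k} |\varGamma(k)| \leq N$. Computing $\gamma = \min_{\rho} \sigma(\rho)$, $SNG(D_{\varGamma(k)})$, the reduction loop over $ST_{p}$ and $D_{p}$, the $O(1)$ updates of $ES_{i_{k}}$ and $ES_{j_{k}+1}$, and the incremental maintenance of $PL$ and $\sigma$ are all $O(|\varGamma(k)|)$ per set. Adding the $O(N)$ recomputation of $\varGamma$ and $c$ on line~20 gives one outer iteration of cost $O(N)$.

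The crux is to show the outer loop runs at most $O(N^{2})$ times. Each inner set-processing reduces the block's landing times by the strictly positive integer $pos = \min(SNG(D_{\varGamma(k)}), ES_{i_{k}}, \gamma)$ (Lemma~\ref{pos}), so I would classify each inner step by which of the three quantities attains the minimum. Because $ST_{i}$ is monotonically non-increasing over the whole run, both $D_{i}$ and $\sigma(i)$ are monotonically non-increasing; hence each aircraft can trigger a $SNG(D)$-event (i.e., cross $D_{i} \leq 0$ from above) at most once, and a $\gamma$-event (i.e., cross $\sigma(i) = 0$) at most once. That contributes at most $2N$ inner iterations of those two types overall. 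For the remaining case $pos = ES_{i_{k}}$, I would argue by a structural potential on the block decomposition: when $ES_{i_{k}} \to 0$, the block merges with its left neighbor in the next recomputation of $\varGamma$, so $i_{k}$ ceases to be a leader; because once $ES_{i} = 0$ both $i$ and $i-1$ subsequently move in lockstep unless a $\sigma$-event freezes $i$, a leader can be created at most $O(N)$ times per aircraft, bounding attachment events by $O(N^{2})$. Summing the three types gives $O(N^{2})$ total inner steps, hence $O(N^{2})$ outer iterations, and multiplying by the $O(N)$ per-iteration cost yields the claimed $O(N^{3})$.

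The main obstacle is making the potential argument for the $pos = ES_{i_{k}}$ case precise: it is the only class of events whose governing quantity, $ES_{i_{k}}$, is not monotone, so I have to set the potential jointly on block-leader status and on frozen aircraft and track how attachments can be re-created only when a preceding block separates. Every other ingredient reduces to a linear scan or a direct monotonicity observation, so once the attachment bound is in place the $O(N^{3})$ estimate follows by routine addition.
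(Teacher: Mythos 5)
Your decomposition is genuinely different from the paper's: the paper charges $O(N^2)$ to each pass of the \textbf{while} loop (dominated by recomputing $\varGamma$, bounded via $\sum_k O(x_k^2) \leq \left(\sum_k O(x_k)\right)^2 = O(N^2)$) and then asserts that the number of passes $\lambda$ is at most $N$, whereas you charge only $O(N)$ per pass and allow $O(N^2)$ passes. Both splits would give $O(N^3)$, but your version has a genuine gap exactly where you flag it: the amortized count of $pos = ES_{i_k}$ events. The statement that ``a leader can be created at most $O(N)$ times per aircraft'' is asserted, not proved, and the supporting observation about lockstep movement does not deliver it. The quantity $ES_i$ is re-inflated by line $15$ of Algorithm~\ref{main} every time a block ending at $i-1$ advances while $i$ does not, so bounding how often aircraft $i$ regains leader status requires bounding how often such an advance occurs --- which is essentially the iteration count you are trying to establish. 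As written the argument is circular, and without the $O(N^2)$ cap on outer iterations the whole estimate is unsupported. Note also that your $O(N)$ per-pass cost depends on constructing $\varGamma$ in linear time, including the iterated truncation of $j_k$ to $\mu-1$ when the last condition of Equation~\eqref{eqrsa} fails; the paper budgets $O(x_k^2)$ per set for precisely this step. A prefix-sum array plus a precomputed ``previous index with $D \leq 0$'' pointer does make the truncation loop $O(1)$ per shrink and hence $O(N)$ amortized per pass, but you need to say so, because if the construction reverts to the paper's $O(N^2)$ while your iteration count stays at $O(N^2)$, the product is $O(N^4)$ and the lemma is not established. Both halves of your decomposition must hold simultaneously, and currently neither is fully proved.

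The paper's route avoids your hard case entirely by never amortizing over $ES$-events: it simply bounds $\lambda$ by $N$ and absorbs the $\varGamma$ recomputation into a generous $O(N^2)$ per pass. (To be fair, the paper's own justification of $\lambda \leq N$ --- that $\lambda$ is at most the maximum number of aircraft in any set $\varGamma(k)$ --- is itself terse and does not engage with the non-monotonicity of $ES$ either; but if you want a self-contained argument along your lines, the missing piece is a monovariant or potential function that controls leader re-creation, and that is the one ingredient your proposal does not supply.)
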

\begin{proof}
Calculation of $\varGamma$ requires finding all the sets of planes landing consecutively, such that they hold certain properties as mentioned in Equation~\eqref{eqrsa}. The worst case scenario for the calculation of $\varGamma$ will occur when every aircraft lies in one of the sets of $\varGamma$. Let any set $\varGamma(k)$ has $x_{k}$ aircraft where $k=1,2,\dots,c$, then we have, $\sum_{\substack{k=1}}^{c} x_{k}=N$, since all the sets of $\varGamma$ are disjoint. The run-time for the calculating and checking the first four properties of any set $\varGamma(k)$ is $O(x_{k})$. However the computation of $\mu$ and checking $\sum_{\substack{\rho=\mu}}^{j_{k}} PL_{\rho} >0$ requires a computation of all the prior properties, if $\mu$ exists. In the worst case the value of $j_{k}$ will drop down to $i_{k}+1$ and this would require a total run-time of $O(x_{k}^2)$ where $x_{k}$ is the number of aircraft in the set $\varGamma(k)$ obtained initially by the computation of the first four properties in Equation~\eqref{eqrsa}. Let $T$ be the run-time of the computation of all the sets of $\varGamma$. Since all the properties are calculated in a sequential manner, we have, $T= \sum_{\substack{k=1}}^{c} O({x_{k}}^2)$. Besides, $x_{k}>0$ for $k=1,2,\dots,c$, we can write $\sum_{\substack{k=1}}^{c} O({x_{k}}^2) \leq \left(\sum_{\substack{k=1}}^{c} O(x_{k})\right)^2 $. Now using $\sum_{\substack{k=1}}^{c} x_{k}=N$ we get $T=O(N^2)$. It is straight forward to observe that the complexity of Algorithm~\ref{improve} is $O(N^2)$ due to the calculations of $ST$ and $ES$. The computation of $PL$ and $Sol$ in Algorithm~\ref{main} are both of $O(N)$ each. The $while$ loop in line $4$ involves several iterations so we first study the run-time of a single iteration of the $while$ loop. The $for$ loop in line $5$ is run for the number of sets in $\varGamma$. Hence, the total run-time of the $for$ loop is $\sum_{\substack{k=1}}^{c} O(x_{k})$, which is again equal to $O(N)$. The next steps inside the $while$ loop involve the computation of $Sol$ with a run-time of $O(N)$ and all the sets of $\varGamma$ which requires $O(N^2)$ run-time each at every iteration. Since the computation of $Sol$ and $\varGamma$ is carried out sequentially, the total run-time complexity of the algorithm is basically equal to $O(\lambda N^2)$, where $\lambda$ is the number of times the $while$ loop is iterated. Clearly, the maximum value of $\lambda$ can be equal to the maximum number of aircraft in any set $\varGamma(k)$, which is equal to the total number of aircraft $N$. Hence the run-time complexity of Algorithm~\ref{main} is $O(N^3)$.
\end{proof}
\section{Results}
We now present our results for the aircraft landing problem with single and multiple runways for the benchmark instances provided by Beasley~\cite{beasley1}. We implement the algorithm as described above to find the optimal solution for the special case of the ALP in conjunction with Simulated Annealing (SA). We use a slightly modified Simulated Annealing algorithm to generate the landing sequences and Algorithm~\ref{main} to optimize each sequence to its minimum penalty. The ensemble size for SA is taken to be $20$ for all the instances. The initial temperature is kept as twice the standard deviation of the energy at infinite temperature: $\sigma_{E_{T=\infty}} = \sqrt{\langle E^2 \rangle_{T=\infty} - \langle E \rangle^2_{T=\infty}}$. We estimate this quantity by randomly sampling the configuration space~\cite{salamon}. An exponential schedule for cooling is adopted with a cooling rate of $0.999$. One of the modifications from the standard SA is in the acceptance criterion. We implement two acceptance criteria: the Metropolis acceptance probability, $\min\{1,\exp((-\hspace{-0.3em}\bigtriangleup\hspace{-0.3em} E)/T)\}$~\cite{salamon} and a constant acceptance probability of $0.07$. A solution is accepted with this constant probability if it is rejected by the Metropolis criterion. This concept of a constant probability is useful when the SA is run for many iterations and the metropolis acceptance probability is almost zero, since the temperature would become infinitesimally small. 

Apart from this, we also incorporate elitism in our modified SA. Elitism has been successfully adopted in evolutionary algorithms for several complex optimization problems~\cite{elitist1,elitist2}. As for the perturbation rule, we first randomly select a certain number of aircraft in any given landing sequence and permute them randomly to create a new sequence. The number of planes selected for this permutation is taken as  $3 + \lfloor\sqrt{N/50}\rfloor$, where $N$ is the number of aircraft. For large instances the size of this permutation is quite small but we have observed that it works well with our modified simulated annealing algorithm. We take the initial landing sequence for our algorithm as the sequence as per their target landing times.

\begin{table}
\centering
{\scriptsize
\caption{Results for small benchmark instances and comparison with Scatter Search and the Bionomic Algorithm~\cite{pinol}.}
\label{result1}
\begin{tabular}{|p{0.3cm} p{0.3cm} p{0.95cm}| p{1.0cm} p{0.6cm} p{0.7cm}| p{1.0cm} p{0.6cm} p{0.95cm} | p{1cm} p{0.7cm} p{0.7cm}|} \hline
\textbf{N}&\textbf{R}&$\mathbf{Z_{\textrm{opt}}}$&\multicolumn{3}{c|}{\textbf{SCS}}  & \multicolumn{3}{c|}{\textbf{BA}}  & \multicolumn{3}{c|}{\textbf{PSA}} \\ \cline{4-6} \cline{7-9} \cline{10-12}
&         &           & $\mathbf{Z_{\textrm{SCS}}}$ &$\mathbf{T_{\textrm{run}}}$& $\mathbf{G_{\textrm{best}}}$ & $\mathbf{Z_{\textrm{BA}}}$ &$\mathbf{T_{\textrm{run}}}$& $\mathbf{G_{\textrm{best}}}$ & $\mathbf{Z_{\textrm{PSA}}}$ &$\mathbf{T_{\textrm{run}}}$& $\mathbf{G_{\textrm{best}}}$ \\ \hline
\multirow{3}{*}{10} & 1 &  700  &     700 &  0.4 &     0 &     700 &    6 &     0 &   700 & 0.006 & 0 \\
					& 2 &   90  &      90 &  2.4 &     0 &      90 &  4.5 &     0 &    90 & 0.211 & 0 \\
					& 3 &    0  &       0 &  3.9 &     0 &       0 &  3.4 &     0 &     0 & 0.243 & 0 \\ \hline
\multirow{3}{*}{15} & 1 &  1480 &    1480 &  0.6 &     0 &    1480 &    9 &     0 &  1480 & 0.095 & 0 \\
					& 2 &   210 &     210 &  4.5 &     0 &     210 &  4.9 &     0 &   210 & 0.312 & 0 \\
					& 3 &     0 &       0 &  4.6 &     0 &       0 &  4.3 &     0 &     0 & 0.290 & 0 \\ \hline
\multirow{3}{*}{20} & 1 &  820  &     820 &  0.8 &     0 &     820 &  9.9 &     0 &   820 & 0.300 & 0 \\
					& 2 &   60  &      60 &  4.8 &     0 &      60 &  5.8 &     0 &    60 & 0.363 & 0 \\
					& 3 &    0  &       0 &  6.2 &     0 &       0 &  6.3 &     0 &     0 & 0.381 & 0 \\ \hline
\multirow{4}{*}{20} & 1 &  2520 &    2520 &  0.8 &     0 &    2520 &  9.5 &     0 &  2520 & 0.014 & 0 \\
					& 2 &   640 &     640 &  5.2 &     0 &     640 &  5.5 &     0 &   640 & 0.352 & 0 \\
					& 3 &   130 &     130 &  4.6 &     0 &     130 &  5.7 &     0 &   130 & 0.371 & 0 \\
					& 4 &     0 &       0 &  5.6 &     0 &       0 &  5.2 &     0 &     0 & 0.377 & 0 \\ \hline
\multirow{4}{*}{20} & 1 &  3100 &    3100 &  0.9 &     0 &    3100 &   10 &     0 &  3100 & 0.285 & 0 \\
					& 2 &   650 &     650 &    5 &     0 &  670.02 &  6.1 &  3.08 &   650 & 2.230 & 0 \\
					& 3 &   170 &     170 &  5.4 &     0 &     170 &  4.3 &     0 &   170 & 0.456 & 0 \\
					& 4 &     0 &       0 &  5.6 &     0 &       0 &  6.8 &     0 &     0 & 0.507 & 0 \\ \hline
\multirow{3}{*}{30} & 1 & 24442 &   24442 & 15.8 &     0 &   24442 & 27.4 &     0 & 24442 & 0.002 & 0 \\
					& 2 &   554 &     554 &    7 &     0 &  573.99 & 10.1 &  3.61 &   554 & 2.629 & 0 \\
					& 3 &     0 &       0 &  5.4 &     0 &       0 &  8.7 &     0 &     0 & 0.297 & 0 \\ \hline
\multirow{2}{*}{44} & 1 &  1550 &    1550 & 19.5 &     0 &    1550 &  7.9 &     0 &  1550 & 0.015 & 0 \\
					& 2 &     0 &       0 & 11.8 &     0 &       0 & 12.4 &     0 &     0 & 0.345 & 0 \\ \hline
\multirow{3}{*}{50} & 1 &  1950 & 2964.97 &  4.2 & 52.05 & 2654.92 & 28.7 & 36.15 &  1995 & 3.915 & 2.31 \\
					& 2 &   135 &     135 & 12.1 &     0 &     135 & 19.6 &     0 &   135 & 4.357 & 0 \\
					& 3 &     0 &       0 & 13.9 &     0 &       0 & 18.1 &     0 &     0 & 0.421 & 0 \\ \hline
\multicolumn{3}{|c|}{\textbf{{\scriptsize Average}}}&  & 6.0  &   2.1 &         &  9.6 &   1.7 &       & 0.75 & 0.092 \\ \hline
\end{tabular}
}
\end{table}
\vspace{0.5em}
All the computations were carried out in MATLAB on a $1.73$ GHz machine with $2$ GB RAM. To better explain and compare our results we first define some new parameters used in Table~\ref{result1} and~\ref{result2}. Most of these parameters are derived from Pinol~\emph{et al.}~\cite{pinol} with slight changes as explained below.

\noindent
Let, \\
$Z_{\textrm{opt}}$ = the value of the optimal solution, \\
$Z_{\textrm{best}}$ = the best known solutions for ALP provided in~\cite{pinol}, \\
$Z_{\textrm{SCS}}$ = the best solutions obtained in~\cite{pinol} using Scatter Search (SCS),\\
$Z_{\textrm{BA}}$ = the best solutions obtained in~\cite{pinol} using the Bionomic Algorithm (BA), \\
$Z_{\textrm{PSA}}$ = the best solutions obtained in this work, \\
$T_{\textrm{run}}$ = the average run-time in seconds over $10$ replications, \\
$G_{\textrm{best}}$ = percentage gap between the best obtained results and $Z_{\textrm{opt}}$ if the optimal solution known and $Z_{\textrm{best}}$ if the optimal solution is not known.

\begin{table}
\centering
{\scriptsize
\caption{Results for large benchmark instances and comparison with the Scatter Search and the Bionomic Algorithm~\cite{pinol}.}
\label{result2}
\begin{tabular}{|p{0.3cm} p{0.3cm} p{0.95cm}| p{1.0cm} p{0.6cm} p{0.7cm}| p{1.0cm} p{0.6cm} p{0.95cm} | p{1cm} p{0.7cm} p{0.7cm}|} 
\hline
\textbf{N}& \textbf{R} & $\mathbf{Z_{\textrm{best}}}$ & \multicolumn{3}{c|}{\textbf{SCS}}  & \multicolumn{3}{c|}{\textbf{BA}}  & \multicolumn{3}{c|}{\textbf{PSA}} \\ \cline{4-6} \cline{7-9} \cline{10-12}
&         &           & $\mathbf{Z_{\textrm{SCS}}}$ &$\mathbf{T_{\textrm{run}}}$& $\mathbf{G_{\textrm{best}}}$ & $\mathbf{Z_{\textrm{BA}}}$ &$\mathbf{T_{\textrm{run}}}$& $\mathbf{G_{\textrm{best}}}$ & $\mathbf{Z_{\textrm{PSA}}}$ &$\mathbf{T_{\textrm{run}}}$& $\mathbf{G_{\textrm{best}}}$ \\ \hline
\multirow{4}{*}{100}& 1 & 5611.70 & 7298.57 &11.9& 30.06 & 6425.95 &55.4& 14.51 & 5703.54 &14.294& 1.637 \\
					& 2 &  452.92 &  478.6 &34.2&  5.67 & 700.80 &48.7& 54.73 & \textbf{444.1}  &10.78& \textbf{*} \\
					& 3 &   75.75 &  75.75 & 39&     0 & 142.00 &46.6& 87.46 & 75.75  &0.868& 0 \\
					& 4 &       0 &      0 &33.6&     0 & NA &43.9&     n/d & 0  &0.027& 0 \\ \hline
\multirow{5}{*}{150}& 1 & 12329.31 &  17872.56 &22.7& 44.96 & 16508.94 &92.5&  33.90 & 13515.68 & 31.411& 9.62 \\
					& 2 &  1288.73 & 1390.15 &60.8&  7.87 & 1623.15 &84.5&  25.95 & \textbf{1203.76}  &29.090& \textbf{*} \\
					& 3 &   220.79 &  240.39 &66.8&  8.88 & 653.27 &80.3& 195.88 &  \textbf{205.21}  &19.010& \textbf{*} \\
					& 4 &    34.22 &  39.94 &64.7& 16.74 & 134.27 &78.8& 292.40 & 34.22 &3.532& 0 \\
					& 5 &        0 &      0 &60.7&     0 & NA &76.2&      n/d &  0  &0.0171& 0 \\ \hline
\multirow{5}{*}{200}& 1 & 12418.32 &  14647.40 &25.6& 17.95 & 14488.45 & 141.7 & 16.67 & 13401.57 &27.782& 7.92 \\
					& 2 &  1540.84 & 1682.44 &95.9&  9.19 & 2134.67 &128.7&  38.54 & \textbf{1400.64}  & 43.77 & \textbf{*} \\
					& 3 &   280.82 & 341.44 &102.1& 21.59 & 1095.45 &120.3& 290.09 &  \textbf{253.15}  &11.125& \textbf{*} \\
					& 4 &    54.53 &  56.04 &99.3&  2.77 & 313.25 &116.8& 474.47 & 54.53  &0.0245& 0 \\
					& 5 &        0 &      0 &95.6&     0 & NA &115.8&      n/d &  0  &0.0230& 0 \\ \hline
\multirow{5}{*}{250}& 1 & 16209.78 & 19800.24 &38.1&  22.15 & 20032.04 &201.1 &    23.58 & 17346.45
 &34.93& 7.01 \\
					& 2 &  1961.39 &  2330.13 &126.6&  18.80 & 2945.61 &183.5&    50.18 & \textbf{1753.67}  &47.24& \textbf{*} \\
					& 3 &   290.04 &   340.73 &145.4&  17.48 & 864.34 &171&   198.01 &  \textbf{233.49}  &16.271& \textbf{*} \\
					& 4 &     3.49 &    12.96 &144.5& 271.63 & 464.76 &168.8& 13216.91 & \textbf{2.44}  &1.324& \textbf{*} \\
					& 5 &        0 &        0 &138.6&      0 & NA &166.2&   n/d & 0  &0.0308& 0 \\ \hline
\multirow{5}{*}{500}& 1 & 44832.28 & 46284.84 &123.7&  3.24 & 45294.15 &585.2&     1.03 & \textbf{43052.04} &52.717& \textbf{*} \\
					& 2 &  5501.96 &  5706.63 &383.6&  3.72 & 7563.54 &537.9&    37.47 & \textbf{4593.77}  &48.223& \textbf{*} \\
					& 3 &  1108.51 &  1130.45 &456&  1.98 & 3133.64 &515.8&   182.69 & \textbf{712.81}  &45.168& \textbf{*} \\
					& 4 &   188.46 &   231.76 &441.3& 22.98 & 2425.12 &497.7&  1186.81 & \textbf{89.95}  &48.6& \textbf{*} \\
					& 5 &     7.35 &     7.35 &442.1&     0 & 1647.02 &488.7& 22308.44 &  \textbf{0}  &0.0554& \textbf{*} \\ \hline
\multicolumn{3}{|c|}{\textbf{Average}}	&     &135.5  &   22.0 &   & 197.8 & 1936.5 &  &  20.263 & 1.091 \\ \hline
\multicolumn{12}{l}{NA: Results not available.}
\end{tabular}
}
\end{table}

$G_{\textrm{best}}$ is defined as $G_{\textrm{best}}$ $=$ $100$ $\cdot$ ($Z_{\textrm{PSA}}-Z_{\textrm{best}})/Z_{\textrm{best}}$; if the optimal solution is known then $Z_{\textrm{best}}=Z_{\textrm{opt}}$. However, if $Z_{\textrm{best}}=0$ we follow the same notation as assumed in~\cite{pinol}. If $Z_{\textrm{best}}=0$, then the value of $G_{\textrm{best}}=0$ if the best solution obtained is also zero and n/d (not defined) if the best solution obtained is greater than zero. This definition of $G_{\textrm{best}}$ is the same as explained by Pinol~\emph{et al.}~\cite{pinol}. If for any instance the result obtained by us is better than the best known solution then we denote $G_{\textrm{best}}$ by an asterisk (*). The values of $Z_{\textrm{best}}$ are the best results obtained by Pinol~\emph{et al.}~\cite{pinol} during the course of their work. The results shown in Table~\ref{result1} and~\ref{result2} are obtained by using Algorithm~\ref{main}, Proposition~\ref{prop} and simulated annealing depending on single or multiple runways. For the single runway case we use simulated annealing to generate the landing sequences and Algorithm~\ref{main} to optimize each sequence. For the multiple runway case we first generate a complete landing sequence of all the aircraft using simulated annealing, allocate the aircraft and their landing sequence to each runway using Proposition~\ref{prop} and then apply Algorithm~\ref{main} to each runway separately for optimization. For brevity we call this approach $PSA$. It is clear from Table~\ref{result1} that our approach is much faster and finds the optimal solution for all benchmark instances except for one. The reason that the optimum is found for all other instances is that the optimal sequences for all the remaining instances hold the special case of the safety constraint,~\emph{i.e.,\ \/}the safety constraint for any aircraft depends only on its preceding plane. However for the instance '$airland8$' with $50$ aircraft and a single runway, our algorithm does not return the optimal solution as the optimal landing sequence does not satisfy the special case of the safety constraint.

Nevertheless, our approach still achieves a better result than Scatter Search and the Bionomic Algorithm, with a percentage gap of just $2.31$ percent from the optimal value in less than $4$ seconds. The average percentage gap for our approach on all the benchmark instances is $0.09$ percent as opposed to $2.1$ percent and $1.7$ percent for Scatter Search and the Bionomic Algorithm, respectively. Moreover the average run-time for $PSA$ is just $0.75$ seconds, which is $8$ times faster than Scatter Search with $6.0$ seconds and more than $12$ times faster than the Bionomic Algorithm with an average run-time of $9.6$ seconds. Note that Pinol~\emph{et al.}~\cite{pinol} implemented their algorithms using C++ on a $2$ GHz Pentium PC with $512$ MB memory. 

Table~\ref{result2} presents our results for the large instances. The optimal solutions of these instances are unknown and hence we compare our results with the best known solutions. Not only do we obtain better results than the previous approaches, we also achieve better results than the best known values for $13$ out of $24$ instances. We are unable to reach the best known solutions for four instances but in general we perform much better than Scatter Search and the Bionomic Algorithm. The maximum percentage gap for any of these instances with the best known solutions is $9.62$ percent as opposed to a percentage gap of $44.96$ percent with Scatter Search and $33.90$ percent with the Bionomic Algorithm, for the same set of instances. Again, the average percentage gap for our approach is $1.091$ percent as opposed to $22.0$ percent and $1936.5$ percent for Scatter Search and the Bionomic Algorithm, respectively. Moreover, the average run-time for $PSA$ is just $20.263$ seconds which is again much faster than Scatter Search with $135.5$ seconds and the Bionomic Algorithm with an average run-time of $197.8$ seconds, for all the instances in Table~\ref{result2}. Hence, we show that the use of our polynomial algorithm fetches faster and better results than previous approaches. We would like to mention here that although we do not prove that Proposition~\eqref{prop} returns optimal results, nevertheless we obtain optimal solutions for all the small instances in much less time. For the large instances, the results again show that it is an effective approach and yields better and faster results for all the instances.

\section{Conclusion}
The Aircraft landing problem has mostly been approached using linear programming, meta-heuristic approaches or branch and bound algorithms in the last two decades~\cite{krishna,earnst,beasley,salehipour}. This paper is the first attempt to schedule the landings of the aircraft for a given feasible landing sequence using a polynomially bound algorithm. We have tested our approach over all the benchmark instances which have been applied in major previous research and our results show that we are able to find better solutions than the best known solutions for many instances. In future we intend to optimize our algorithm for the general case of the safety constraint for all the benchmark instances accordingly. The authors are willing to provide the extended results for the results obtained in this work for any (or all) instance(s) via email.

\section*{Acknowledgement}
The research project was promoted and funded by the European Union and the Free State of Saxony. The authors take the responsibility for the content of this publication.

\end{document}